\begin{document}
\title{Optimal monomial quadratization\\ for ODE systems\thanks{The article was prepared within the framework of the HSE University Basic Research Program. GP was partially supported by NSF grants DMS-1853482, DMS-1760448, DMS-1853650, CCF-1564132, and CCF-1563942 and by the Paris Ile-de-France region. The authors are grateful to Mathieu Hemery, Fran\c{c}ois Fages, and Sylvain Soliman for helpful discussions.
The work has started when G. Pogudin worked at the Higher School of Economics, Moscow.
The authors would like to thank the referees for their comments, which helped us improve the manuscript.}}

\author{Andrey Bychkov\inst{1}
\and
Gleb Pogudin\inst{2}
}
\authorrunning{A. Bychkov, G. Pogudin}

\institute{
Higher School of Economics, Myasnitskaya str., 101978 Moscow, Russia
\email{abychkov@edu.hse.ru} \and
LIX, CNRS, \'Ecole Polytechnique, Institute Polytechnique de Paris, Palaiseau, France \email{gleb.pogudin@polytechnique.edu}
}

\maketitle              
\begin{abstract}
Quadratization is a transform of a system of ODEs with polynomial right-hand side into a system of ODEs with at most quadratic right-hand side via the introduction of new variables.
Quadratization problem is, given a system of ODEs with polynomial right-hand side, transform the system to a system with quadratic right-hand side by introducing new variables. 
Such transformations have been used, for example, as a preprocessing step by model order reduction methods and for transforming chemical reaction networks.

We present an algorithm that, given a system of polynomial ODEs, finds a transformation into a quadratic ODE system by introducing new variables which are monomials in the original variables.
The algorithm is guaranteed to produce an optimal transformation of this form (that is, the number of new variables is as small as possible), and it is the first algorithm with such a guarantee we are aware of.
Its performance compares favorably with the existing software, and it is capable to tackle problems that were out of reach before.

\keywords{differential equations  \and branch-and-bound \and quadratization.}
\end{abstract}
\section{Introduction}

The \emph{quadratization} problem considered in this paper is, given a system of ordinary differential equations (ODEs) with polynomial right-hand side, transform it into a system with quadratic right-hand side (see Definition~\ref{def:quadr}).
We illustrate the problem on a simple example of a scalar ODE:
\begin{equation}\label{eq:ex_start}
x' = x^5.
\end{equation}
The right-hand side has degree larger than two but if we introduce a new variable $y := x^4$, then we can write:
\begin{equation}\label{eq:ex_finish}
x' = xy,\quad \text{ and }\quad y' = 4x^3 x' = 4x^4y = 4y^2.
\end{equation}
The right-hand sides of~\eqref{eq:ex_finish} are of degree at most two, and every solution of~\eqref{eq:ex_start} is the $x$-component of some solution of~\eqref{eq:ex_finish}.

A problem of finding such a transformation (\emph{quadratization}) for an ODE system has appeared recently in several contexts:
\begin{itemize}
    \item One of the recent approaches to \emph{model order reduction}~\cite{Gu} uses quadratization as follows.
    For the ODE systems with quadratic right-hand side, there are dedicated model order reduction methods which can produce a better reduction than the general ones.
    Therefore, it can be beneficial to perform a quadratization first and then use the dedicated methods.
    For further details and examples of applications, we refer to~\cite{Gu,KW19b,KW19a,RWBG20}.
    \item Quadratization has been used as a preprocessing step for \emph{solving differential equations numerically}~\cite{CV1,GCV2019,KCV2013}.
    \item Applied to~\emph{chemical reaction networks}, quadratization allows one to transform a given chemical reaction network into a bimolecular one~\cite{CRN}.
\end{itemize}

It is known (e.g.~\cite[Theorem~3]{Gu}) that it is always possible to perform quadratization with new variables being monomials in the original variables (like $x^4$ in the example above).
We will call such quadratization \emph{monomial} (see Definition~\ref{def:monomial_quadr}).
An algorithm for finding some monomial quadratization has been described in~\cite[Section G.]{Gu}.
In~\cite{CRN}, the authors have shown that the problem of finding an optimal (i.e. of the smallest possible dimension) monomial quadratization is NP-hard.
They also designed and implemented an algorithm for finding a monomial quadratization which is practical and yields an optimal monomial quadratization in many cases (but not always, see Section~\ref{sec:prior}).

In this paper, we present an algorithm that computes an optimal monomial quadratization for a given system of ODEs.
To the best of our knowledge, this is the first practical algorithm with the optimality guarantee.
In terms of efficiency, our implementation compares favorably to the existing software~\cite{CRN} (see Table~\ref{table:CRM_comp}).
The implementation is publicly available at~\url{https://github.com/AndreyBychkov/QBee/}.
Our algorithm follows the classical Branch-and-Bound approach~\cite{BBsurvey} together with problem-specific search and branching strategies and pruning rules (with one using the extremal graph theory, see Section~\ref{sec:ruleC4}).

Note that, according to~\cite{Foyez}, one may be able to find a quadratization of lower dimension by allowing the new variables to be arbitrary polynomials, not just monomials.
We restrict ourselves to the monomial case because it is already challenging (e.g., includes an APX-hard [2]-sumset cover problem, see Remark~\ref{rem:2sumsetcover}) and monomial transformations are relevant for some application areas~\cite{CRN}.

The rest of the paper is organized as follows.
In Section~\ref{sec:statement}, we state the problem precisely.
In Section~\ref{sec:prior}, we review the prior approaches, most notably~\cite{CRN}.
Sections~\ref{sec:alg} and~\ref{sect:pruning_rules} describe our algorithm.
Its performance is demonstrated and compared to~\cite{CRN} in Section~\ref{sec:performace}.
Sections~\ref{sec:complexity} and~\ref{sec:conclusion} contain remarks on the complexity and conclusions/open problems, respectively.


\section{Problem Statement}\label{sec:statement}

\begin{definition}\label{def:quadr}
  Consider a system of ODEs
  \begin{equation}\label{eq:sys_main}
  x_1' = f_1(\bar{x}),\;\;\ldots,\;\;\ x_n' = f_n(\bar{x}),
  \end{equation}
  where $\bar{x} = (x_1, \ldots, x_n)$ and 
  $f_1, \ldots, f_n \in \mathbb{C}[\mathbf{x}]$.
  Then a list of new variables
  \begin{equation}\label{eq:quadr}
      y_1 = g_1(\bar{x}), \ldots, y_m = g_m(\bar{x}),
  \end{equation}
  is said to be a \emph{quadratization} of~\eqref{eq:sys_main} if there exist polynomials $h_1, \ldots, h_{m + n} \in \mathbb{C}[\bar{x}, \bar{y}]$ of degree at most two such that 
  \begin{itemize}
      \item $x_i' = h_i(\bar{x}, \bar{y})$ for every $1 \leqslant i \leqslant n$;
      \item $y_j' = h_{j + n}(\bar{x}, \bar{y})$ for every $1 \leqslant j \leqslant m$.
  \end{itemize}
  The number $m$ will be called \emph{the order of quadratization}.
  A quadratization of the smallest possible order will be called \emph{an optimal quadratization}.
\end{definition}

\begin{definition}\label{def:monomial_quadr}
  If all the polynomials $g_1, \ldots, g_m$ are monomials, the quadratization is called \emph{a monomial quadratization}.
  If a monomial quadratization of a system has the smallest possible order among all the monomial quadratizations of the system, it is called \emph{an optimal monomial quadratization}.
\end{definition}

Now we are ready to precisely state the main problem we tackle:
\begin{description}
  \item[Input] A system of ODEs of the form~\eqref{eq:sys_main}.
  \item[Output] An optimal monomial quadratization of the system.
\end{description}

\begin{example} \label{ex:simp_example}
  Consider a single scalar ODE $x' = x^5$ from~\eqref{eq:ex_start}, that is $f_1(x) = x^5$.
  As has been show in~\eqref{eq:ex_finish}, $y = x^4$ is a quadratization of the ODE with $g(x) = x^4$, $h_1(x, y) = xy$, and $h_2(x, y) = 4y^2$.
  Moreover, this is a monomial quadratization.
  
  Since the original ODE is not quadratic, the quadratization is optimal, so it is also an optimal monomial quadratization.
\end{example}

\begin{example}
\textit{The Rabinovich-Fabrikant system}~\cite[Eq. (2)]{RF1} is defined as follows:
\[
  x' = y (z - 1 + x^2) + ax,\;\;  y' = x(3z + 1 - x^2) + ay,\;\; z' = -2z (b + xy).
\]
Our algorithm finds an optimal monomial quadratization of order three: $z_1 = x^2, z_2 = xy, z_3 = y^2$.
The resulting quadratic system is:
\begin{align*}
		x' &= y(z_{1} + z - 1) + ax,& z_{1}' &= 2z_{1} (a + z_{2}) + 2 z_{2}( z - 1),\\
		y' &= x(3z + 1 - z_{1}) + ay,& z_{2}' &= 2 a z_{2} + z_{1}(3z + 1 - z_{1} +  z_{3}) + z_{3}(z - 1)\\
		z' &= -2z(b + z_{2}),& z_{3}' &= 2 a z_{3} + 2z_{2}(3z + 1 - z_{1}).
\end{align*}  
\end{example}


\section{Discussion of prior approaches}\label{sec:prior}

To the best of our knowledge, the existing algorithms for quadratization are~\cite[Algotirhm~2]{Gu} and~\cite[Algorithm~2]{CRN}.
The former has not been implemented and is not aimed at producing an optimal quadratization: it simply adds new variables until the system is quadratized, and its termination is based on~\cite[Theorem~2]{Gu}.

It has been shown~\cite[Theorem~2]{CRN} that finding an optimal quadratization is NP-hard.
The authors designed and implemented an algorithm for finding a small (but not necessarily optimal) monomial quadratization which proceeds as follows.
For an $n$-dimensional 
system $\bar{x}' = \bar{f}(\bar{x})$, define, for every $1 \leqslant i \leqslant n$,
\[
    D_i := \max\limits_{1 \leqslant j \leqslant n} \deg_{x_i} f_j.
\]
Then consider the set 
\begin{equation}\label{eq:setM}
M := \{ x_1^{d_1} \ldots x_n^{d_n} \mid 0 \leqslant d_1 \leqslant D_1, \ldots, 0 \leqslant d_n \leqslant D_n\}.
\end{equation}
\cite[Proof of Theorem~1]{CPSW05} implies that there always exists a monomial quadratization with the new variables from $M$.
The idea behind~\cite[Algorithm~2]{CRN} is to search for an optimal quadratization \emph{inside} $M$. 
This is done by an elegant encoding into a MAX-SAT problem.

However, it turns out that the set $M$ does not necessarily contain an optimal monomial quadratization.
As our algorithm shows, this happens, for example, for some of the benchmark problems from~\cite{CRN} (Hard and Monom series, see~Table~\ref{table:CRM_comp}).
Below we show a simpler example illustrating this phenomenon.

\begin{example}
  Consider a system
  \begin{equation}\label{eq:counterexample}
    x_1' = x_2^4, \quad x_2' = x_1^2.
  \end{equation}
  Our algorithm shows that it has a unique optimal monomial quadratization
  \begin{equation}\label{eq:new_vars_counterex}
    z_1 = x_1 x_2^2,\;\; z_2 = x_2^3,\;\; z_3 = x_1^3
  \end{equation}
  yielding the following quadratic ODE system:
  \begin{align*}
      x_1' &= x_2z_2, & z_1' &= x_2^6 + 2x_1^3x_2 = z_2^2 + 2x_2z_3, & z_3' &= 3x_1^2x_2^4 = 3z_1^2,\\
      x_2' &= x_1^2, & z_2' &= 3x_1^2x_2^2 = 3x_1z_1. & & 
  \end{align*}
  The degree of~\eqref{eq:new_vars_counterex} with respect to $x_1$ is larger than the $x_1$-degree of the original system~\eqref{eq:counterexample}, so such a quadratization will not be found by the algorithm~\cite{CRN}.
\end{example}

It would be interesting to find an analogue of the set $M$ from~\eqref{eq:setM} always containing an optimal monomial quadratization as this would allow using powerful SAT-solvers.
For all the examples we have considered, the following set worked
\[
  \widetilde{M} := \{ x_1^{d_1} \ldots x_n^{d_n} \mid 0 \leqslant d_1 , \ldots,  d_n \leqslant D\}, \quad \text{where } D := \max\limits_{1 \leqslant i \leqslant n} D_i.
\]


\section{Outline of the algorithm}\label{sec:alg}

Our algorithm follows the general  \emph{Branch-and-Bound (B\&B)} paradigm~\cite{BBsurvey}.
We will describe our algorithm using the standard B\&B terminology (see, e.g., \cite[Section~2.1]{BBsurvey}).

\begin{definition}[B\&B formulation for the quadratization problem]\label{def:bb}
  \begin{itemize}
      \item \emph{The search space} is a set of all monomial quadratizations of the input system $\bar{x}' = \bar{f}(\bar{x})$.
      \item \emph{The objective function} to be minimized is the number of new variables introduced by a quadratization.
      \item Each \emph{subproblem} is defined by a set of new monomial variables $z_1(\bar{x}), \ldots, z_\ell(\bar{x})$ and the corresponding subset of the search space is the set of all quadratizations including the variables $z_1(\bar{x}), \ldots, z_\ell(\bar{x})$.
  \end{itemize}
\end{definition}

\begin{definition}[Properties of a subproblem]\label{def:vars_nonsq}
To each subproblem (see Definition~\ref{def:bb}) defined by new variables $z_1(\bar{x}), \ldots, z_\ell(\bar{x})$, we assign:
\begin{enumerate}
    \item the set of \emph{generalized variables}, denoted by $V$, consisting of the polynomials $1, x_1, \ldots, x_n, z_1(\bar{x}), \ldots, z_\ell(\bar{x})$;
    \item the set of \emph{nonsquares}, denoted by $\operatorname{NS}$, consisting of all the monomials in the derivatives of the generalized variables which do not belong to $V^2 := \{v_1v_2 \mid v_1, v_2 \in V\}$. 
    In particular, a subproblem is a quadratization iff $\operatorname{NS} = \varnothing$.
\end{enumerate}
\end{definition}

\begin{example}\label{ex:V_NS}
  We will illustrate the notation introduced in Definition~\ref{def:vars_nonsq} on a system $x' = x^4 + x^3$ and a new variable $z_1(x) = x^3$.
  We have $z_1' = 3x^2 x' = 3x^6 + 3x^5$.
  Therefore, for this subproblem, we have:
  \[
    V = \{ 1, x, x^3 \}, \quad V^2 = \{1, x, x^2, x^3, x^4, x^6\}, \quad \operatorname{NS} = \{x^5\}.
  \]
\end{example}

In order to organize a B\&B search in the search space defined above, we define several subroutines/strategies answering the following questions:
\begin{itemize}
    \item \emph{How to set the original bound?}
    \cite[Theorem~1]{CPSW05} implies that the set $M$ from~\eqref{eq:setM} gives a quadratization of the original system, so it can be used as the starting incumbent solution.
    \item \emph{How to explore the search space?}
    There are two subquestions:
    \begin{itemize}
        \item \emph{What are the child subproblems of a given subproblem (branching strategy)?}
        This is described in Section~\ref{sec:branching}.
        \item \emph{In what order we traverse the tree of the subproblems?}
        We use DFS (to make new incumbents appear earlier) guided by a heuristic as described in Algorithm~\ref{alg:BnB_main}.
    \end{itemize}
    \item \emph{How to prune the search tree (pruning strategy)?}
    We use two algorithms for computing a lower bound for the objective function in a given subtree, they are described and justified in Section~\ref{sect:pruning_rules}.
\end{itemize}

\subsection{Branching strategy}\label{sec:branching}

Let $\bar{x}' = \bar{f}(\bar{x})$ be the input system.
Consider a subproblem defined by new monomial variables $z_1(\bar{x}), \ldots, z_\ell(\bar{x})$.
The child subproblems will be constructed as follows:
\begin{enumerate}
    \item among the nonsquares ($\operatorname{NS}$, see Definition~\ref{def:vars_nonsq}), choose any monomial $m = x_1^{d_1}\ldots x_n^{d_n}$ with the value $\prod_{i = 1}^n (d_i + 1)$ the smallest possible;
    \item for every decomposition $m = m_1 m_2$ as a product of two monomials, define a new subproblem by adding the elements of $\{m_1, m_2\} \setminus V$ (see Definition~\ref{def:vars_nonsq}) as new variables.
    Since $m \in \operatorname{NS}$, at least one new variable will be added.
\end{enumerate}

The score function $\prod_{i = 1}^n (d_i + 1)$ is twice the number of representations $m = m_1m_2$, so this way we reduce the branching factor of the algorithm.

\begin{lemma}\label{lem:branching}
    Any optimal subproblem $z_1(\bar{x}), \ldots, z_\ell(\bar{x})$ is a solution of at least one of the children subproblems generated by the procedure above.
\end{lemma}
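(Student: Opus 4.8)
The plan is to take an arbitrary solution $Q$ of the current subproblem --- that is, any monomial quadratization of $\bar x' = \bar f(\bar x)$ whose new variables include $z_1, \ldots, z_\ell$ --- and exhibit one child subproblem of which $Q$ is also a solution. (Optimality of $Q$ will in fact play no role; the statement holds for every solution of the subproblem.) Write $V_Q$ for the set of generalized variables of $Q$, and let $V$ be the set of generalized variables of the current subproblem. Since $Q$ contains each $z_i$ and $V_Q$ also contains $1, x_1, \ldots, x_n$, we have $V \subseteq V_Q$.

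First I would record that the derivative $v'$ of any generalized variable $v$ is a fixed polynomial in $\bar x$, obtained by the chain rule together with the substitutions $x_i' = f_i(\bar x)$; it does not depend on the ambient subproblem. The branching step picks a monomial $m \in \operatorname{NS}$, meaning that $m$ occurs in $v'$ for some $v \in V$ while $m \notin V^2$. Because $v \in V \subseteq V_Q$ and $Q$ is a quadratization, every monomial occurring in $v'$ belongs to $V_Q^2$; in particular $m \in V_Q^2$. As all generalized variables are monomials, this yields monomials $m_1, m_2 \in V_Q$ with $m = m_1 m_2$.

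Next I would match this factorization with the branching enumeration: $m = m_1 m_2$ is one of the decompositions of $m$ into a product of two monomials, so the procedure creates a child subproblem whose new variables are $z_1, \ldots, z_\ell$ together with the elements of $\{m_1, m_2\} \setminus V$. To finish, it suffices to check that $Q$ contains every variable of this child. It contains $z_1, \ldots, z_\ell$ by hypothesis. Any element of $\{m_1, m_2\} \setminus V$ is, by being outside $V$, in particular not among $1, x_1, \ldots, x_n$; yet it lies in $V_Q = \{1, x_1, \ldots, x_n\} \cup \{\text{new variables of } Q\}$, so it must be a new variable of $Q$. Hence $Q$ contains all variables of the child subproblem and is a solution of it.

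The only genuinely delicate step is $m \in V_Q^2$, and it is worth stating carefully why it holds: the nonsquare $m$ is detected as a monomial of the derivative of a generalized variable already present in $V$, and that very variable --- with the same derivative --- also belongs to $V_Q$, which is what lets us invoke the quadratization property of $Q$. Once $V \subseteq V_Q$ is in place and derivatives are seen to be subproblem-independent, the remaining bookkeeping (that the factors of $m$ which are new for the current subproblem are precisely new variables of $Q$) is routine.
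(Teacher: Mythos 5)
Your proof is correct and follows the same route as the paper's (much terser) argument: any solution $Q$ of the subproblem must represent the chosen nonsquare $m$ as a product of two of its generalized variables, and that factorization is one of the enumerated children, whose extra variables are already among $Q$'s new variables. Your write-up just makes explicit the points the paper leaves implicit ($V \subseteq V_Q$, subproblem-independence of derivatives, and why $m \in V_Q^2$).
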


\begin{proof}
   Let $z_1(\bar{x}), \ldots, z_n(\bar{x})$ be any solution of the subproblem.
   Since $m$ must be either of the form $z_i z_j$ or $z_j$, it will be a solution of the child subproblem corresponding to the decomposition $m = z_i z_j$ or $m = 1 \cdot z_j$, respectively.
\end{proof}

\begin{example}
    Figure~\ref{fig:simple_graph} below show the graph representation of system $x' = x^4 + x^3$ from Example~\ref{ex:V_NS}.
    The starting vertex is $\varnothing$.
    The underlined vertices correspond to optimal quadratizations, so the algorithm will return one of them. 
    On the first step, the algorithm chooses the monomial $x^3$ which has two decompositions $x^3 = x \cdot x^2$ and $x^3 = 1 \cdot x^3$ yielding the left and the right children of the root, respectively.
    The subproblem $\{x^3\}$ was described in more details in Example~\ref{ex:V_NS}.
    
    The score function $\prod_{i = 1}^n(d_i + 1)$ for the decompositions $x^3 = x \cdot x^2$ and $x^3 = 1 \cdot x^3$ takes values $6$ and $4$, respectively.
    Hence the algorithm will first explore the branch on the right.
    \begin{figure}
    \centering
    \begin{tikzpicture}
        \node (0) at ( 0, 0) {$\varnothing$}; 
        \node (x2) at ( -2, -1) {$\{x^2\}$};
        \node (x3) at ( 2, -1) {$\{x^3\}$};
        \node (x2x3) at ( 0, -2) {$\underline{\{x^2,x^3\}}$};
        \node (x2x4) at ( -4,-2) {$\{x^2,x^4\}$};
        \node (x2x5) at ( -2,-2) {$\{x^2,x^5\}$};
        \node (x3x4) at ( 2,-2) {$\underline{\{x^3,x^4\}}$};
        \node (x3x5) at ( 4,-2) {$\{x^3,x^5\}$};
    
        \begin{scope}[every path/.style={->}]
           \draw (0) -- (x2);
           \draw (0) -- (x3);
           \draw (x2) -- (x2x3);
           \draw (x2) -- (x2x4);
           \draw (x2) -- (x2x5);
           \draw (x3) -- (x2x3);
           \draw (x3) -- (x3x4);
           \draw (x3) -- (x3x5);
        \end{scope}  
    \end{tikzpicture}

    \caption{Graph illustration for equation $x' = x^4 + x^3$}
    \label{fig:simple_graph}
\end{figure}
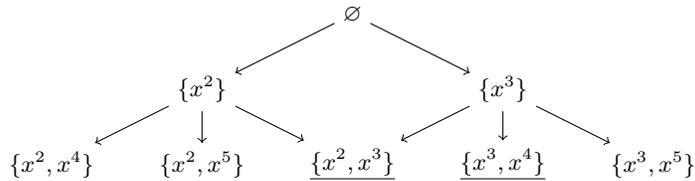
\end{example}

\subsection{Recursive step of the algorithm}
The recursive step of our algorithm can be described as follows.

{\small
\begin{algorithm}[H]
\caption{Branch and Bound recursive step}\label{alg:BnB_main}

\begin{description}
  \item[Input]
  \begin{itemize}
      \item[]
      \item polynomial ODE system $\bar{x}' = \bar{f}(\bar{x})$;
      \item set of new variables $z_1(\bar{x}), \ldots, z_\ell(\bar{x})$;
      \item an optimal quadratization found so far (incumbent) with $N$ new variables.
  \end{itemize}
  \item[Output] the algorithm replaces the incumbent with a more optimal quadratization containing $z_1(\bar{x}), \ldots, z_\ell(\bar{x})$ if such quadratization exists.
\end{description}

\begin{enumerate}[label = \textbf{(Step~\arabic*)}, leftmargin=*, align=left, labelsep=2pt, itemsep=0pt]
    \item if $z_1(\bar{x}), \ldots, z_\ell(\bar{x})$ is a quadratization
    \begin{enumerate}
        \item if $\ell < N$, replace the incumbent with $z_1(\bar{x}), \ldots, z_\ell(\bar{x})$;
        \item \KwRet;
    \end{enumerate}
    \item if any of the pruning rules (Algorithm~\ref{alg:early_quadratic} or~\ref{alg:early_C4}) applied to $z_1(\bar{x}), \ldots, z_\ell(\bar{x})$ and $N$ return \texttt{True}, \KwRet;
    \item generate set $C$ of child subproblems as described in Section~\ref{sec:branching}
    \item sort $C$ in increasing order w.r.t. $S + n |V|$, where $S$ is the sum of the degrees of the elements in $V$ ($V$ is different for different subproblems as defined in Definition~\ref{def:vars_nonsq});
    \item for each element of $C$, call Algorithm~\ref{alg:BnB_main} on it.
\end{enumerate}
\end{algorithm}}
 
\section{Pruning rules}\label{sect:pruning_rules}

In this section, we present two pruning rules yielding a substantial speedup of the algorithm: based on a quadratic upper bound and based on squarefree graphs.

\begin{property}\label{prop:early_io_spec}
Each pruning rule has the following input-output specification:
\begin{description}

    \item[Input:] \begin{itemize}
        \item[]
        \item the original ODE system $\bar{x}' = \bar{f}(\bar{x})$;
        \item already added new variables $z_1(\bar{x}), \ldots, z_\ell(\bar{x})$ which are monomials in $\bar{x}$;
        \item positive integer $N$.
    \end{itemize}
    
    \item[Output:] \texttt{True} if it is guaranteed that the set of new variables $z_1(\bar{x}), \ldots, z_s(\bar{x})$ cannot be extended to a monomial quadratization of $\bar{x}' = \bar{f}(\bar{x})$ of order less then $N$.
    \texttt{False} otherwise.
\end{description}
\end{property}

Note that, if \texttt{False} is returned, it does not imply that the set of new variables can be extended.

\begin{remark}
   Both pruning rules presented here actually check a stronger condition: whether the set of new variables can be extended by at most $N - s$ variables so that all the monomials $\operatorname{NS}$ in the \emph{current} subproblem can be written as a product of two generalized variables.
   It would be very interesting to strengthen these rules by taking into account the derivatives of the extra new variables.
\end{remark}

\subsection{Rule based on quadratic upper bound}

\begin{remark}[Intuition behind the rule]\label{rem:quadr_rationale}
Consider a subproblem with the generalized variables $V$ and set of nonsquares $\operatorname{NS}$ (see Definition~\ref{def:vars_nonsq}).
Assume that it can be quadratized by adding a set $W$ of variables.
This would imply that $\operatorname{NS} \subseteq (V \cup W)^2$.
This yields a bound 
\begin{equation}\label{eq:quadratic_rationale}
  |\operatorname{NS}| \leqslant \frac{(|V| + |W|)(|V| + |W| + 1)}{2}.
\end{equation}
The general ideal of the rule is: since $|V|$ and $|\operatorname{NS}|$ are known, \eqref{eq:quadratic_rationale} can be used to find a lower bound for $|W|$.
However, a straightforward application of~\eqref{eq:quadratic_rationale} does not lead to noticeable performance improvements.
We found that one can do much better by first estimating the number of elements of $\operatorname{NS} \cap (V \cdot W)$ and then applying an argument as in~\eqref{eq:quadratic_rationale} to $\operatorname{NS} \setminus (V \cdot W)$ and $W$.
\end{remark}

{\small
\begin{algorithm}[H]
\caption{Pruning rule: based on a quadratic upper bound}\label{alg:early_quadratic}

\begin{enumerate}[label = \textbf{(Step~\arabic*)}, leftmargin=*, align=left, labelsep=2pt, itemsep=0pt]
    \item\label{step:setD} Compute the following multiset of monomials in $\bar{x}$
    \[
      D := \{ m / v  \mid m \in \operatorname{NS}, v \in V, v \mid m\}.
    \]
    \item Let \texttt{mult} be the list of multiplicities of the elements of $D$ sorted in the descending order.
    \item Find the smallest integer $k$ such that 
    \begin{equation}\label{eq:square_bound}
      |\operatorname{NS}| \leqslant \sum\limits_{i = 1}^k \operatorname{mult}[i] + \frac{k(k + 1)}{2}.  
    \end{equation}
    (We use 1-based indexing and set $\operatorname{mult}[i] = 0$ for $i > |\operatorname{mult}|$)
    \item If $k + \ell \geqslant N$, return \texttt{True}. Otherwise, return \texttt{False}.
\end{enumerate}
\end{algorithm}}

\begin{lemma}\label{lem:square_bound}
    Algorithm~\ref{alg:early_quadratic} satisfied the specification described in Property~\ref{prop:early_io_spec}.
\end{lemma}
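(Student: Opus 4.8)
The plan is to establish \emph{soundness}: whenever Algorithm~\ref{alg:early_quadratic} outputs \texttt{True}, it is genuinely impossible to extend $z_1, \ldots, z_\ell$ to a monomial quadratization of order below $N$ (outputting \texttt{False} needs no justification, since Property~\ref{prop:early_io_spec} imposes no requirement on it). The whole rule rests on a single lower bound, which I would isolate as a claim: any monomial quadratization containing $z_1, \ldots, z_\ell$ must introduce at least $k$ further variables, where $k$ is the integer computed in Step~3.

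First I would fix an arbitrary extension $z_1, \ldots, z_\ell, w_1, \ldots, w_p$ to a monomial quadratization and let $V' = V \cup \{w_1, \ldots, w_p\}$ be its generalized variables. By Definition~\ref{def:quadr}, every monomial appearing in the derivatives of the generalized variables of the current subproblem must lie in $(V')^2$; in particular each $m \in \operatorname{NS}$ factors as $m = uv$ with $u, v \in V'$. Because $m \notin V^2$ by the definition of a nonsquare, at least one factor is new, so at least one of two situations occurs: \textbf{(a)} $m = v\, w_j$ with $v \in V$, or \textbf{(b)} $m = w_i w_j$ with both factors among the new variables.

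Next I would bound, for each type, how many distinct elements of $\operatorname{NS}$ it can account for. For a fixed $w_j$, the monomials of the form $v\, w_j$ with $v \in V$ are pairwise distinct, so the number of them lying in $\operatorname{NS}$ equals $|\{v \in V : v\, w_j \in \operatorname{NS}\}|$, which is precisely the multiplicity of $w_j$ in the multiset $D$ of Step~\ref{step:setD} (and $0$ if $w_j \notin D$). Summing over $j$ and maximizing over the choice of $p$ distinct new monomials, the number of nonsquares covered through type \textbf{(a)} is at most $\sum_{i=1}^{p} \operatorname{mult}[i]$, the sum of the $p$ largest multiplicities. For type \textbf{(b)}, the number of unordered products $w_i w_j$ (repetitions allowed) is at most $\binom{p}{2} + p = \frac{p(p+1)}{2}$. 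Since every element of $\operatorname{NS}$ falls under \textbf{(a)} or \textbf{(b)}, a union bound gives $|\operatorname{NS}| \leqslant \sum_{i=1}^{p}\operatorname{mult}[i] + \frac{p(p+1)}{2}$, i.e. inequality~\eqref{eq:square_bound} with $k$ replaced by $p$.

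Finally I would invoke monotonicity: the right-hand side of~\eqref{eq:square_bound} strictly increases in its index, so the minimal $k$ in Step~3 is well defined, and the displayed inequality forces $p \geqslant k$. Hence every quadratization extending $z_1, \ldots, z_\ell$ has order at least $\ell + k$; when Step~4 fires, $k + \ell \geqslant N$, and no extension of order below $N$ can exist, as required. I expect the main obstacle to be the bookkeeping of the covering argument rather than any deep idea: one must justify that it suffices to cover the \emph{current} nonsquares $\operatorname{NS}$ -- the new variables $w_j$ may create additional nonsquares through their own derivatives, but these only strengthen the constraints, so the bound derived from $\operatorname{NS}$ alone remains valid -- and one must be careful that the per-variable type-\textbf{(a)} count is the multiplicity in $D$ rather than an overcount, which is exactly why distinctness of the products $v\, w_j$ is needed.
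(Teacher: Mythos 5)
Your proposal is correct and follows essentially the same argument as the paper's proof: the split of $\operatorname{NS}$ into products involving an old variable (bounded via the multiplicities in $D$) versus products of two new variables (bounded by $\frac{p(p+1)}{2}$), followed by the minimality of $k$, is exactly the paper's decomposition into $\operatorname{NS}_0$ and $\operatorname{NS}_1$. Your closing remarks on the sufficiency of covering only the current nonsquares and on the distinctness of the products $v\,w_j$ are correct refinements that the paper leaves implicit.
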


\begin{proof}
   Assume that Algorithm~\ref{alg:early_quadratic} has returned \texttt{True}.
   Consider any quadratization $z_1, \ldots, z_{\ell + r}$ of $\bar{x}' = \bar{f}(\bar{x})$ extending $z_1, \ldots, z_\ell$.
   We define $\widetilde{V}$, a superset of $V$, as $\{1, x_1, \ldots, x_n, z_1, \ldots, z_{\ell + r}\}$.
   By the definition of quadratization, $\operatorname{NS} \subseteq \widetilde{V}^2$.
   We split $\operatorname{NS}$ into two subsets $\operatorname{NS}_0 := \operatorname{NS} \cap (V \cdot \widetilde{V})$ and $\operatorname{NS}_1 := \operatorname{NS} \setminus \operatorname{NS}_0$.
   For every $1 \leqslant i \leqslant r$, the cardinality of $\operatorname{NS} \cap (z_{\ell + i} \cdot V)$ does not exceed the multiplicity of $z_{\ell + i}$ in the multiset $D$ constructed at~\ref{step:setD}.
   Therefore, $|\operatorname{NS}_0| \leqslant \sum\limits_{i = 1}^r \operatorname{mult}[i]$.
   The number of products of the form $z_{\ell + i}z_{\ell + j}$ with $1 \leqslant i \leqslant j \leqslant r$ does not exceed $\frac{r(r + 1)}{2}$.
   Therefore, we have
   \[
     |\operatorname{NS}| = |\operatorname{NS}_0| + |\operatorname{NS}_1| \leqslant \sum\limits_{i = 1}^r \operatorname{mult}[i] + \frac{r(r + 1)}{2},
   \]
   so $r$ satisfies~\eqref{eq:square_bound}.
   The minimality of $k$ implies $r \geqslant k$.
   Thus, $r + \ell \geqslant N$, so $z_1, \ldots, z_\ell$ cannot be extended to a quadratization of order less than $N$.
\end{proof}


\subsection{Rule based on squarefree graphs}\label{sec:ruleC4}

\begin{remark}[Intuition behind the rule]
We will illustrate the idea behind the rule in a simple example.
Assume that we have five monomials $m_1, \ldots, m_5$ such that none of them is a square.
Assume also that there is a set $V$ of monomial new variables such that $|V| = 4$ and $m_i \in V^2$ for every $i$.
Since none of $m_i$'s is a square, it can be written as $m_i = z_{i, 1} z_{i, 2}$ for distinct $z_{i, 1}, z_{i, 2} \in V$.
We can therefore think about a graph with vertices being elements of $V$ and edges given by $m_1, \ldots, m_5$.
One can check that every graph with four vertices and five edges must contain a four-cycle.
Let the cycle consist of edges $m_1, m_2, m_3, m_4$ in this order.
Then, for some numbering of elements in $V$, we have:
\[
  m_1 = z_1 z_2,\; m_2 = z_2 z_3,\; m_3 = z_3 z_4,\; m_4 = z_4 z_1 \implies m_1 m_3 = m_2 m_4.
\]
Thus, by checking that all pairwise product of $m_1, \ldots, m_5$ are distinct, we can verify that $m_1, \ldots, m_5 \in V^2$ implies that $|V| > 4$.

In order to take into account the monomials which are squares, we consider not just graphs but pseudographs.
We also employ the separation strategy $\operatorname{NS} = (\operatorname{NS} \cap (V\cdot W)) \cup (\operatorname{NS} \setminus (V\cdot W))$ as described in Remark~\ref{rem:quadr_rationale}.
\end{remark}

\begin{definition}
  A pseudograph $G$ (i.e., a graph with loops and multiple edges allowed) is called \emph{$C4^\ast$-free} if there is no cycle of length four in $G$ with every two adjacent edges being distinct (repetition of edges and/or vertices is allowed).
\end{definition}

\begin{example}
  A $C4^\ast$-free pseudograph cannot contain:
  \begin{itemize}[topsep=0pt]
      \item \emph{A vertex with two loops.}
      If the loops are $\ell_1$ and $\ell_2$ then the cycle $\ell_1, \ell_2, \ell_1, \ell_2$ will violate $C4^\ast$-freeness.
      \item \emph{Multiple edges.} If $e_1$ and $e_2$ are edges with the same endpoints, then $e_1, e_2, e_1, e_2$ will violate $C4^\ast$-freeness.
      \item \emph{Two vertices with loops connected by an edge.} If the loops are $\ell_1$ and $\ell_2$ and the edge is $e$, then $\ell_1, e, \ell_2, e$ will violate $C4^\ast$-freeness.
  \end{itemize}
\end{example}

\begin{definition}\label{def:Cnm}
  By $C(n, m)$ we denote the largest possible number of edges in a $C4^\ast$-free pseudograph $G$ with $n$ vertices and at most $m$ loops.
\end{definition}

\begin{remark}
  Note that the example above implies that $C(n, n + k) = C(n, n)$ for every positive integer $k$ because a $C4^\ast$-free pseudograph cannot contain more than $n$ loops.
  
  The number $C(n, 0)$ is the maximal number of edges in a $C4$-free graph and has been extensively studied (e.g.~\cite{C42,Clapham1989,C4ERS,C4}). 
  Values for $n \leqslant 31$ are available as a sequence A006855 in OEIS~\cite{OEIS}.
 \end{remark}
 
In Algorithm~\ref{alg:early_C4}, we use the exact values for $C(n, m)$ found by an exhaustive search and collected in Table~\ref{table:Cnm} for $n \leqslant 7$. The script for the search is available at~\url{https://github.com/AndreyBychkov/QBee/blob/0.5.0/qbee/no_C4_count.py}.
For $n > 7$, we use the following bound
\[
  C(n, m) \leqslant C(n, 0) + m \leqslant \frac{n}{2}(1 + \sqrt{4n - 3}) + m,
\]
where the bound for $C(n, 0)$ is due to~\cite[Chapter 23, Theorem~1.3.3]{HC}.

\begin{table}[h]
\centering
\setlength{\tabcolsep}{0.5em} 
{\renewcommand{\arraystretch}{1.2}
\begin{tabular}{c|*{8}{c|}}
 & \multicolumn{8}{l}{m}\\
n & 0 & 1 & 2 & 3 & 4 & 5 & 6 & 7 \\
\hline
1 & 0 & 1 &   &   &   &   &   &   \\ 
2 & 1 & 2 & 2 &   &   &   &   &   \\ 
3 & 3 & 3 & 4 & 4 &   &   &   &   \\ 
4 & 4 & 5 & 5 & 6 & 6 &   &   &   \\ 
5 & 6 & 6 & 7 & 7 & 8 & 8 &   &   \\ 
6 & 7 & 8 & 9 & 9 & 9 & 10 & 10 &   \\
7 & 9 & 10 & 11 & 12 & 12 & 12 & 12 &  12 \\
\hline
\end{tabular}
\vspace{2mm}
\caption{Exact values for $C(n, m)$ (see Definition~\ref{def:Cnm}).}\label{table:Cnm}
}
\end{table}

{\small
\begin{algorithm}[H]
\caption{Pruning rule: based on squarefree graphs}\label{alg:early_C4}

\begin{enumerate}[label = \textbf{(Step~\arabic*)}, leftmargin=*, align=left, labelsep=2pt, itemsep=0pt]
    \item\label{step:constructE} Compute a subset $E = \{m_1, \ldots, m_e\} \subseteq \operatorname{NS}$ such that all the products $m_im_j$ for $1 \leqslant i \leqslant j \leqslant e$ are distinct.
    
    (done by traversing $\operatorname{NS}$ in a descending order w.r.t. the total degree and appending each monomial if it does not violate the property)
    
    \item\label{step:setD_C4} Compute the following multiset of monomials in $\mathbf{x}$
    \[
      D := \{ m / v  \mid m \in E, v \in V, v \mid m\}.
    \]
    
    \item Let \texttt{mult} be the list of multiplicities of the elements of $D$ sorted in descending order.
    
    \item Let $c$ be the number of elements in $E$ with all the degrees being even.
    
    \item Find the smallest integer $k$ such that 
    \begin{equation}\label{eq:C4_bound}
      |E| \leqslant \sum\limits_{i = 1}^k \operatorname{mult}[i] + C(k, c).  
    \end{equation}
    (We use 1-based indexing and set $\operatorname{mult}[i] = 0$ for $i > |\operatorname{mult}|$)
    \item If $k + \ell \geqslant N$, return \texttt{True}. Otherwise, return \texttt{False}.
\end{enumerate}
\end{algorithm}}

\begin{lemma}
    Algorithm~\ref{alg:early_C4} satisfied the specification described in Property~\ref{prop:early_io_spec}.
\end{lemma}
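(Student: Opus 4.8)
The plan is to follow the skeleton of the proof of Lemma~\ref{lem:square_bound}, replacing the crude bound $\tfrac{r(r+1)}{2}$ on the number of products of pairs of new variables by the extremal quantity $C(k,c)$. Suppose Algorithm~\ref{alg:early_C4} returns \texttt{True} and let $z_1, \ldots, z_{\ell+r}$ be any monomial quadratization extending $z_1, \ldots, z_\ell$; set $\widetilde{V} := \{1, x_1, \ldots, x_n, z_1, \ldots, z_{\ell+r}\}$, so that $\operatorname{NS} \subseteq \widetilde{V}^2$ and in particular $E \subseteq \widetilde{V}^2$. I would split $E$ into $E_0 := E \cap (V \cdot \widetilde{V})$ and $E_1 := E \setminus E_0$. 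The bound $|E_0| \leqslant \sum_{i=1}^r \operatorname{mult}[i]$ is obtained exactly as in Lemma~\ref{lem:square_bound}: since $E \subseteq \operatorname{NS}$ is disjoint from $V^2$, every element of $E_0$ lies in $z_{\ell+i} \cdot V$ for some $i$, and $|E \cap (z_{\ell+i} \cdot V)|$ is at most the multiplicity of $z_{\ell+i}$ in the multiset $D$ from~\ref{step:setD_C4}; summing over the $r$ distinct new variables gives at most the sum of the $r$ largest multiplicities.

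The new ingredient is bounding $|E_1|$ by $C(r,c)$. Each $m \in E_1$ satisfies $m \in \widetilde{V}^2$ but $m \notin V \cdot \widetilde{V}$, so in any factorization $m = w_1 w_2$ with $w_1, w_2 \in \widetilde{V}$ both factors must be among the new variables $z_{\ell+1}, \ldots, z_{\ell+r}$. Fix one such factorization for each $m \in E_1$ and build a pseudograph $G$ whose vertices are the $r$ new variables and which has, for each $m \in E_1$, the edge joining the two factors (a loop when $m = z_{\ell+i}^2$). Distinct elements of $E_1$ give distinct edges because the endpoints of an edge determine the corresponding monomial, and a loop can only arise from a perfect-square monomial; hence $G$ has exactly $|E_1|$ edges and at most $c$ loops, $c$ being the count from the algorithm.

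The crux is to show that $G$ is $C4^\ast$-free, which is where the defining property of $E$ from~\ref{step:constructE} enters. I would argue by contradiction: a $C4^\ast$ consists of four edges $f_1, f_2, f_3, f_4$ forming a closed walk with consecutive edges distinct; writing $m_1, m_2, m_3, m_4 \in E_1$ for the corresponding monomials and letting $u_1, u_2, u_3, u_4$ be the visited new variables (so that $m_1 = u_1 u_2$, $m_2 = u_2 u_3$, $m_3 = u_3 u_4$, $m_4 = u_4 u_1$), one reads off $m_1 m_3 = u_1 u_2 u_3 u_4 = m_2 m_4$. Because all products of pairs of elements of $E$ (including squares, i.e. the case of equal indices) are pairwise distinct by~\ref{step:constructE}, this forces the multisets $\{m_1, m_3\}$ and $\{m_2, m_4\}$ to coincide; in either resulting case two consecutive edges of the walk carry the same monomial and are therefore equal, contradicting the definition of $C4^\ast$. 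Hence $G$ is $C4^\ast$-free, and Definition~\ref{def:Cnm} yields $|E_1| \leqslant C(r, c)$.

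Combining the two bounds gives $|E| \leqslant \sum_{i=1}^r \operatorname{mult}[i] + C(r, c)$, i.e. $r$ satisfies~\eqref{eq:C4_bound}; minimality of $k$ then gives $r \geqslant k$, whence the order $\ell + r \geqslant \ell + k \geqslant N$, which is exactly the guarantee required by Property~\ref{prop:early_io_spec}. I expect the $C4^\ast$-freeness step to be the main obstacle: the care needed lies in treating the degenerate configurations of the closed walk (repeated vertices, the opposite edges $f_1, f_3$ coinciding, and loops) and in checking that the distinctness of products enforced in~\ref{step:constructE} — together with the bookkeeping that loops come only from the $c$ even-degree monomials — rules each of them out.
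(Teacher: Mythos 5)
Your proof is correct and follows essentially the same route as the paper's: the same split $E = E_0 \cup E_1$, the same multiplicity bound on $E_0$, and the same pseudograph construction with the relation $m_1m_3 = m_2m_4$ contradicting the distinct-products property of $E$ to establish $C4^\ast$-freeness. Your handling of the degenerate cases (deriving that the multisets $\{m_1,m_3\}$ and $\{m_2,m_4\}$ would have to coincide and hence that two adjacent edges would be equal) is in fact slightly more careful than the paper's one-line assertion that the two pairs are disjoint, but it is the same argument.
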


\begin{proof}
   Assume that Algorithm~\ref{alg:early_quadratic} has returned \texttt{True}.
   Consider any quadratization $z_1, \ldots, z_{\ell + r}$ of $\bar{x}' = \bar{f}(\bar{x})$ extending $z_1, \ldots, z_\ell$.
   We define $\widetilde{V}$, a superset of $V$, as $\{1, x_1, \ldots, x_n, z_1, \ldots, z_{\ell + r}\}$.
   By the definition of quadratization, $E \subseteq \operatorname{NS} \subseteq \widetilde{V}^2$.
   Similarly to the proof of Lemma~\ref{lem:square_bound}, we split $E$ into two subsets
   \[
     E_0 := E \cap (V \cdot \widetilde{V}) \quad\text{ and }\quad E_1 := E \setminus E_0.
   \]
   For every $1 \leqslant i \leqslant r$, the cardinality of $E \cap (z_{\ell + i} \cdot V)$ does not exceed the multiplicity of $z_{\ell + i}$ in the multiset $D$ from~\ref{step:setD_C4}.
   Therefore, $|E_0| \leqslant \sum\limits_{i = 1}^r \operatorname{mult}[i]$.
   
   Consider a pseudograph $G$ with $r$ vertices numbered from $1$ to $r$ corresponding to $z_{\ell + 1}, \ldots, z_{\ell + r}$, respectively.
   For every element $m \in E_1$, we fix a representation $m = z_{\ell + i}z_{\ell + j}$, and add an edge connecting vertices $i$ and $j$ in $G$ (this will be a loop of $i = j$).
   We claim that pseudograph $G$ will be $C4^\ast$-free. 
   Indeed, if there is a cycle formed by edges $m_1, m_2, m_3, m_4 \in E_0$, then we will have $m_1 \cdot m_3 = m_2 \cdot m_4$.
   Moreover, $\{m_1, m_3\} \cap \{m_2, m_4\} = \varnothing$, so such a relation contradicts the condition on $E$ imposed by~\ref{step:constructE}.
   Finally, a monomial $m \in E$ can correspond to a loop in $G$ only if it is a square, that is, all the degrees in $m$ are even.
   Hence $E_1$, the total number of edges in $G$, does not exceed $C(r, c)$
   
   In total, we have
   \[
     |E| = |E_0| + |E_1| \leqslant \sum\limits_{i = 1}^r \operatorname{mult}[i] + C(r, c),
   \]
   so $r$ satisfies~\eqref{eq:C4_bound}.
   The minimality of $k$ implies that $r \geqslant k$.
   Thus, $r + \ell \geqslant N$, so $z_1, \ldots, z_\ell$ cannot be extended to a quadratization of order less than $N$.
\end{proof}

\begin{remark}[Cycles of even length]
   One can modify this rule to use graphs not containing cycles of even length. 
   In this case, the set $E$ from~\ref{step:constructE} of Algorithm~\ref{alg:early_C4} would satisfy the condition that there are no multi-subsets of equal cardinality and with equal product.
   However, this approach did not work that well in practice, in particular, due to the overhead for finding such $E$.
\end{remark}

\subsection{Performance of the pruning rules}

Table~\ref{table:main_bench} below shows the performance of our algorithm with a different combination of the pruning rules employed.
It shows that the rules substantially speed up the computation and that Algorithm~\ref{alg:early_C4} is particularly successful in higher dimensions.

\begin{table}[H]
\centering
\setlength{\tabcolsep}{0.5em} 
{\renewcommand{\arraystretch}{1.2}
\begin{tabular}{|l|c|c|c|c|c|}
\hline
ODE system &  Dimension & No pruning & Alg.~\ref{alg:early_quadratic} & Alg.~\ref{alg:early_C4} & Alg.~\ref{alg:early_quadratic} \& \ref{alg:early_C4}\\
\hline
Circular(8) & $2$ & $4293 \pm 445$  & $497 \pm 5$ & $526 \pm 8$ & $453 \pm 7$\\
Hill(20) & $3$ & $3.4 \pm 0.1$  & $3.0 \pm 0.1$ & $2.4 \pm 0.1$ & $2.4 \pm 0.1$\\
Hard(2) & $3$ & $106.3 \pm 1.0$  & $19.6 \pm 1.1$ & $20.1 \pm 0.6$ & $16.7 \pm 0.6$\\
Hard(4) & $3$ & $360.1 \pm 5.6$  & $107.5 \pm 2.4$ & $108.8 \pm 2.1$ & $96.6 \pm 1.5$\\
Monom(3) & $3$ & $552.9 \pm 10.9$  & $85.7 \pm 4.2$ & $124.7 \pm 5.5$ & $84.2 \pm 3.3$\\
Cubic Cycle(6) & $6$ & $187.3 \pm 0.8$  & $43.6 \pm 0.6$ & $20.0 \pm 0.5$ & $20.1 \pm 0.3$\\
Cubic Cycle(7) & $7$ & $2002 \pm 6.4$ & $360.7 \pm 1.1$ & $150.2 \pm 1.3$ & $160.9 \pm 5.9$ \\
Cubic Bicycle(7) & $7$ & $1742 \pm 89$ & $73.2 \pm 0.6$ & $29.8 \pm 0.3$ & $30.5 \pm 0.2$ \\
Cubic Bicycle(8) & $8$ & $4440+$ & $175.4 \pm 4.0$ & $64.8 \pm 0.5$ & $68.9 \pm 0.7$ \\
\hline
\end{tabular}
}
\vspace{1mm}
\caption{Comparison of the pruning rules used by our algorithm. Values in the cells represent an average time with the standard deviation in seconds.}\label{table:main_bench}
\end{table}


\section{Performance and Results}\label{sec:performace}

We have implemented our algorithm in Python, and the implementation is available at~\url{https://github.com/AndreyBychkov/QBee/tree/0.5.0}.
We compare our algorithm with the one proposed in~\cite{CRN}.
For the comparison, we use the set of benchmarks from~\cite{CRN} and add a couple of new ones (described in the Appendix).

The results of the comparison are collected in Table~\ref{table:CRM_comp}.
All computation times are given either in milliseconds or in seconds and were obtained on a laptop with the following parameters: Intel(R) Core(TM) i7-8750H CPU @ 2.20GHz, WSL Windows 10 Ubuntu 20.04, CPython 3.8.5. 
From the table, we see that the only cases when the algorithm from~\cite{CRN} runs faster are when it does not produce an optimal quadratization (while we do). 
Also, cases, when the algorithm from~\cite{CRN} is not able to terminate, are marked as "—" symbol.

\begin{table}[H]
\centering
\setlength{\tabcolsep}{0.5em} 
{\renewcommand{\arraystretch}{1.2}
\begin{tabular}{|l|c|c|c|c|}
\hline
ODE system &  Biocham time & Biocham order & Our time & Our order\\
\hline
Circular(3), \textbf{ms} & $83.2 \pm 0.1$ & 3 & $5.1 \pm 0.1$ & 3\\
Circular(4), \textbf{ms} & $106.7 \pm 2.3$ & 4 & $164.8 \pm 32.3$ & 4\\
Circular(5), \textbf{ms} & $596.2 \pm 10.9$ & 4 & $20.0 \pm 0.1$ & 4\\
Circular(6), \textbf{s} & $37.6 \pm 0.4$ & 5 & $4.2 \pm 0.1$ & 5\\
Circular(8), \textbf{s} & \textbf{—} & \textbf{—} & $453.3 \pm 6.9$ & 6 \\
Hard(3), \textbf{s} & $1.09 \pm 0.01$ & \color{red}11 & $8.6 \pm 0.2$ & 9\\
Hard(4), \textbf{s} & $20.2 \pm 0.3$ & \color{red}13 & $96.9 \pm 1.5$ & 10\\
Hill(5), \textbf{ms} & $87.8 \pm 0.9$ & 2 & $4.6 \pm 0.0$ & 2\\
Hill(10), \textbf{ms} & $409.8 \pm 5.6$ & 4 & $49.7 \pm 1.3$ & 4\\
Hill(15), \textbf{s} & $64.1 \pm 0.4$ & 5 & $0.34 \pm 0.1$ & 5\\
Hill(20),\textbf{s} & \textbf{—} & \textbf{—} & $2.4 \pm 0.1$ & 6 \\
Monom(2), \textbf{ms} & $96.4 \pm 1.6$ & \color{red}4 & $15 \pm 0.1$ & 3\\
Monom(3), \textbf{s} & $0.44 \pm 0$ & \color{red}13 & $84.2 \pm 3.3$ & 10\\
Cubic Cycle(6), \textbf{s} & \textbf{—} & \textbf{—} & $20.1 \pm 0.3$ & 12\\
Cubic Cycle(7), \textbf{s} & \textbf{—} & \textbf{—} & $160.9 \pm 5.9$ & 14\\
Cubic Bicycle(7), \textbf{s} & \textbf{—} & \textbf{—} & $30.5 \pm 0.2$ & 14\\
Cubic Bicycle(8), \textbf{s} & \textbf{—} & \textbf{—} & $68.9 \pm 0.7$ & 16\\
\hline
\end{tabular}
}
\vspace{1mm}
\caption{Comparison of our implementation with the algorithm \cite{CRN} on a set benchmarks}\label{table:CRM_comp}.
\end{table}

\section{Remarks on the complexity}\label{sec:complexity}

It has been conjectured in~\cite[Conjecture~1]{CRN} that the size of an optimal monomial quadratization may be exponential in the number of monomials of the input system in the worst case.
Interestingly, this is not the case if one allows monomials with negative powers (i.e., Laurent monomials): Proposition~\ref{prop:negative} shows that there exists a quadratization with the number of new variables being linear in the number of monomials in the system.

\begin{proposition}\label{prop:negative}
  Let $\bar{x}' = \bar{f}(\bar{x})$, where $\bar{x} = (x_1, \ldots,x_n)$, be a system of ODEs with polynomial right hand sides.
  For every $1 \leqslant i \leqslant n$, we denote the monomials in the right-hand side of the $i$-th equation by $m_{i, 1}, \ldots, m_{i, k_i}$.
  Then the following set of new variables (given by Laurent monomials) is a quadratization of the original system:
  \[
  z_{i, j} := \frac{m_{i, j}}{x_i} \text{ for every } 1 \leqslant i \leqslant n, \; 1\leqslant j \leqslant k_i.
  \]
\end{proposition}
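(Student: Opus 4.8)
The plan is to compute the time derivatives of the new variables via the logarithmic-derivative formalism, which reduces the whole verification to one clean observation. Throughout I work on the torus where all $x_i \neq 0$ (equivalently, in the Laurent polynomial ring $\mathbb{C}[x_1^{\pm 1}, \ldots, x_n^{\pm 1}]$), so that the Laurent monomials $z_{i,j}$ and the divisions below are well defined.

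First I would record the key identity. Writing $f_i = \sum_{j=1}^{k_i} c_{i,j} m_{i,j}$ with constants $c_{i,j}$, the defining choice $z_{i,j} = m_{i,j}/x_i$ gives $m_{i,j} = x_i z_{i,j}$ and hence
\[
  \frac{x_i'}{x_i} = \frac{f_i}{x_i} = \sum_{j=1}^{k_i} c_{i,j}\frac{m_{i,j}}{x_i} = \sum_{j=1}^{k_i} c_{i,j} z_{i,j}.
\]
Thus the logarithmic derivative of each original variable is a \emph{linear} form in the new variables. This is the crux of the argument and the only place where the specific form of the $z_{i,j}$ is used.

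Next I would verify the two families of equations demanded by Definition~\ref{def:quadr}. For the original variables, multiplying the identity by $x_i$ yields $x_i' = x_i \sum_j c_{i,j} z_{i,j}$, a sum of products $x_i z_{i,j}$, hence of degree at most two in the original and new variables together. For a new variable, write it as a Laurent monomial $z_{i,j} = \prod_{l=1}^n x_l^{e_l}$ (with exponents $e_l$ depending on $i,j$). The product rule for monomials together with the key identity gives
\[
  z_{i,j}' = z_{i,j} \sum_{l=1}^n e_l\,\frac{x_l'}{x_l} = \sum_{l=1}^n \sum_{p=1}^{k_l} e_l\, c_{l,p}\, z_{i,j} z_{l,p},
\]
a linear combination of products $z_{i,j} z_{l,p}$ of two new variables, again of degree at most two. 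Collecting these expressions supplies the required degree-two polynomials $h_1, \ldots, h_{m+n}$, so the $z_{i,j}$ form a quadratization.

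I do not expect a genuine obstacle: once the key identity is in place, both derivative computations are immediate. The only points deserving care are bookkeeping ones — ensuring that the $z_{i,j}$ and all the divisions make sense (handled by restricting to the torus, as above) and noting that a given $z_{i,j}$ may coincide with an original variable, with the constant $1$, or with another new variable, none of which affects the degree bound.
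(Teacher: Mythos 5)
Your proposal is correct and follows essentially the same route as the paper's proof: the logarithmic-derivative identity $z_{i,j}'/z_{i,j} = \sum_l e_l\, x_l'/x_l$ is just a repackaging of the paper's chain-rule computation using $\partial z_{i,j}/\partial x_s$ being proportional to $z_{i,j}/x_s$, and both arguments hinge on the same observation that $m_{s,r}/x_s = z_{s,r}$ turns each term into a product of two new variables.
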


\begin{proof}
  Since $m_{i, j} = z_{i, j} x_i$, the original equations can be written as quadratic in the new variables.
  Let the coefficient in the original system in front of $m_{i, j}$ be denoted by $c_{i, j}$.
  We consider any $1 \leqslant i \leqslant n$, $1 \leqslant j \leqslant k_j$:
  \[
    z_{i, j}' = \sum\limits_{s = 1}^n f_s(\mathbf{x}) \frac{\partial z_{i, j}}{\partial x_s} = \sum\limits_{s = 1}^n \sum\limits_{r = 1}^{k_s} c_{s, r} m_{s, r} \frac{\partial z_{i, j}}{\partial x_s}.
  \]
  Since $\frac{\partial z_{i, j}}{\partial x_s}$ is proportional to $\frac{z_{i, j}}{x_s}$, the monomial $m_{s, r} \frac{\partial z_{i, j}}{\partial x_s}$ is proportional to a quadratic monomial $z_{s, r}z_{i, j}$, so we are done.
\end{proof}

\begin{remark}[{Relation to the $[2]$-sumset cover problem}]\label{rem:2sumsetcover}
   The [2]-sumset cover problem~\cite{SSC} is, given a finite set $S \subset \mathbb{Z}_{> 0}$ of positive integers, find a smallest set $X \subset \mathbb{Z}_{> 0}$ such that $S \subset X \cup \{x_i + x_j \mid x_i, x_j \in X\}$.
   It has been shown in~\cite[Proposition~1]{Fagnot2009} that the [2]-sumset cover problem is APX-hard, moreover, the set $S$ used in the proof contains $1$.
   We will show how to encode this problem into the optimal monomial quadratization problem thus showing that the latter is also APX-hard (in the number of monomials, but not necessarily in the size of the input).
   For $S = \{s_1, \ldots, s_n\} \subset \mathbb{Z}_{> 0}$ with $s_1 = 1$, we define a system
   \[
     x_1' = 0,\quad x_2' = \sum\limits_{i = 1}^n x_1^{s_i}.
   \]
   Then a set $X = \{1, a_1, \ldots, a_\ell\}$ is a minimal [2]-sumset cover of $S$ iff $x_1^{a_1}, \ldots, x_1^{a_\ell}$ is an optimal monomial quadratization of the system.
\end{remark}


\section{Conclusions and Open problems}\label{sec:conclusion}

In this paper, we have presented the first practical algorithm for finding an optimal monomial quadratization.
Our implementation compares favorably with the existing software and allows us to find better quadratizations for already used benchmark problems.
We were able to compute quadratization for ODE systems which could not be tackled before. 

We would like to mention several interesting open problems:
\begin{enumerate}
    \item Is it possible to describe a finite set of monomials which must contain an optimal quadratization? This would allow using SAT-solving techniques of~\cite{CRN} as described in Section~\ref{sec:prior}. 
    \item As has been shown in~\cite{Foyez}, general polynomial quadratization may be of a smaller dimension than an optimal monomial quadratization. 
    This poses a challenge: design an algorithm for finding optimal polynomial quadratization (or at least a smaller one than an optimal monomial).
    \item How to search for optimal monomial quadratizations if negative powers are allowed (see Section~\ref{sec:complexity})?
    \item How to design a faster algorithm for approximate quadratization (that is, finding a quadratization which is close to the optimal) with guarantees on the quality of the approximation?
\end{enumerate}

\bibliographystyle{splncs04}
\bibliography{bibliography}

\section*{Appendix: Benchmark systems}

Most of the benchmark systems used in this paper (in Tables~\ref{table:CRM_comp} and~\ref{table:CRM_comp}) are described in~\cite{CRN}.
Here we show additional benchmarks we have introduced:
\begin{enumerate}
    \item \emph{Cubic Cycle($n$).} For every integer $n > 1$, we define a system in variables $x_1, \ldots, x_n$ by
    \[
      x_1' = x_2^3,\; x_2' = x_3^3,\;\ldots,\; x_n' = x_1^3.
    \]
    
    \item \emph{Cubic Bicycle($n$).}
    For every integer $n > 1$, we define a system in variables $x_1, \ldots, x_n$ by
    \[
      x_1' = x_n^3 + x_2^3,\; x_2' = x_1^3 + x_3^3,\;\ldots,\; x_n' = x_{n - 1}^3 + x_1^3.
    \]
\end{enumerate}

\end{document}